\title{A Linear Time Algorithm for Constructing Hierarchical Overlap Graphs}
\author{Sangsoo Park}{Samsung Electronics, Korea}{cki86201@gmail.com}{https://orcid.org/0000-0002-6593-4336}{}
\author{Sung Gwan Park}{Samsung Electronics, Korea}{sgpark@theory.snu.ac.kr}{https://orcid.org/0000-0002-3255-9752}{}
\author{Bastien Cazaux}{LIRMM, Univ Montpellier, CNRS, France}{bastien.cazaux@lirmm.fr}{https://orcid.org/0000-0002-1761-4354}{}
\author{Kunsoo Park}{Seoul National University, Korea}{kpark@theory.snu.ac.kr}{https://orcid.org/0000-0001-5225-0907}{}
\author{Eric Rivals}{LIRMM, Univ Montpellier, CNRS, France}{rivals@lirmm.fr}{https://orcid.org/0000-0003-3791-3973}{}
\authorrunning{S. Park, S.\,G. Park, B. Cazaux, K. Park, and E. Rivals}
\keywords{overlap graph, hierarchical overlap graph, shortest superstring problem, border array} 
\newcommand{\norm}[1]{\vert \vert #1 \vert \vert }
\newcommand{\hog}{\text{HOG}}
\newcommand{\ehog}{\text{EHOG}}
\begin{document}

\maketitle

\begin{abstract}
The hierarchical overlap graph (HOG) is a graph that encodes overlaps from a given set $P$ of $n$ strings, as the overlap graph does. A best known algorithm constructs HOG in $O(\norm{P} \log n)$ time and $O(\norm{P})$ space, where $\norm{P}$ is the sum of lengths of the strings in $P$. In this paper we present a new algorithm to construct HOG in $O(\norm{P})$ time and space. Hence, the construction time and space of HOG are better than those of the overlap graph, which are $O(\norm{P} + n^2)$.  
\end{abstract}

\section{Introduction}

For a set of strings, a \emph{superstring} of the set is a string that has all strings in the set as a substring. The \emph{shortest superstring} problem is to find a shortest superstring of a set of strings. This problem is known to play an important role in \emph{DNA assembly}, which is the problem to restore the entire genome from short sequencing reads. Despite its importance, the shortest superstring problem is known to be NP-hard \cite{Gallant80}. As a result, extensive research has been done to find good approximation algorithms for the shortest superstring problem \cite{Blum94,Sweedyk00,Mucha13,Paluch14,Tarhio88,Ukkonen90}.

The shortest superstring problem is reduced to finding a shortest hamiltonian path in a graph that encodes overlaps between the strings \cite{Blum94,sga_myers, Pevzner}, which is the \emph{distance graph} or equivalent \emph{overlap graph}. The overlap graph \cite{Peltola-83} of a set of strings is a graph in which each string constitutes a node and an edge connecting two nodes shows the longest overlap between them. Many approaches for approximating the shortest superstring problem focus on the overlap graph, and try to find good approximations of its hamiltonian path \cite{Mucha13,Paluch14}. 

Given a set of strings $P = \{s_1, s_2, ..., s_n\}$, computing the overlap graph of $P$ is equivalent to solving the \emph{all-pair suffix-prefix problem}, which is to find the longest overlap for every pair of strings in $P$. The best theoretical bound for this problem is $O(\norm{P} + n^2)$ \cite{Gusfield92}, where $\norm{P}$ is the sum of lengths of the strings in $P$. Since the input size of the problem is $O(\norm{P})$ and the output size is $O(n^2)$, this bound is optimal. There has also been extensive research on the all-pair suffix-prefix problem in the practical point of view \cite{Gonnella12,Lim17,Rachid15} because it is the first step in DNA assembly.

Recently, Cazaux and Rivals \cite{jda2016, hogipl} proposed a new graph which stores the overlap information, called the \emph{hierarchical overlap graph} (HOG). HOG is a graph with two types of edges (which will be defined in Section 2) in which a node represents either a string or the longest overlap between a pair of strings. The \emph{extended hierarchical overlap graph} (EHOG) is also a graph with two types of edges in which a node represents either a string or an overlap between a pair of strings (which may be not the longest one). For example, Figure \ref{fig:hog} shows EHOG and HOG built with $P = \{aacaa, aagt, gtc\}$. Even though HOG and EHOG may be the same for some input instances, there are the instances where the ratio of EHOG size over the HOG size tend to infinity with respect to the number of nodes. Therefore, HOG has an advantage over EHOG in both practical and theoretical points of view.

\begin{figure}[t]
    \centering
    \includegraphics[width=1\linewidth]{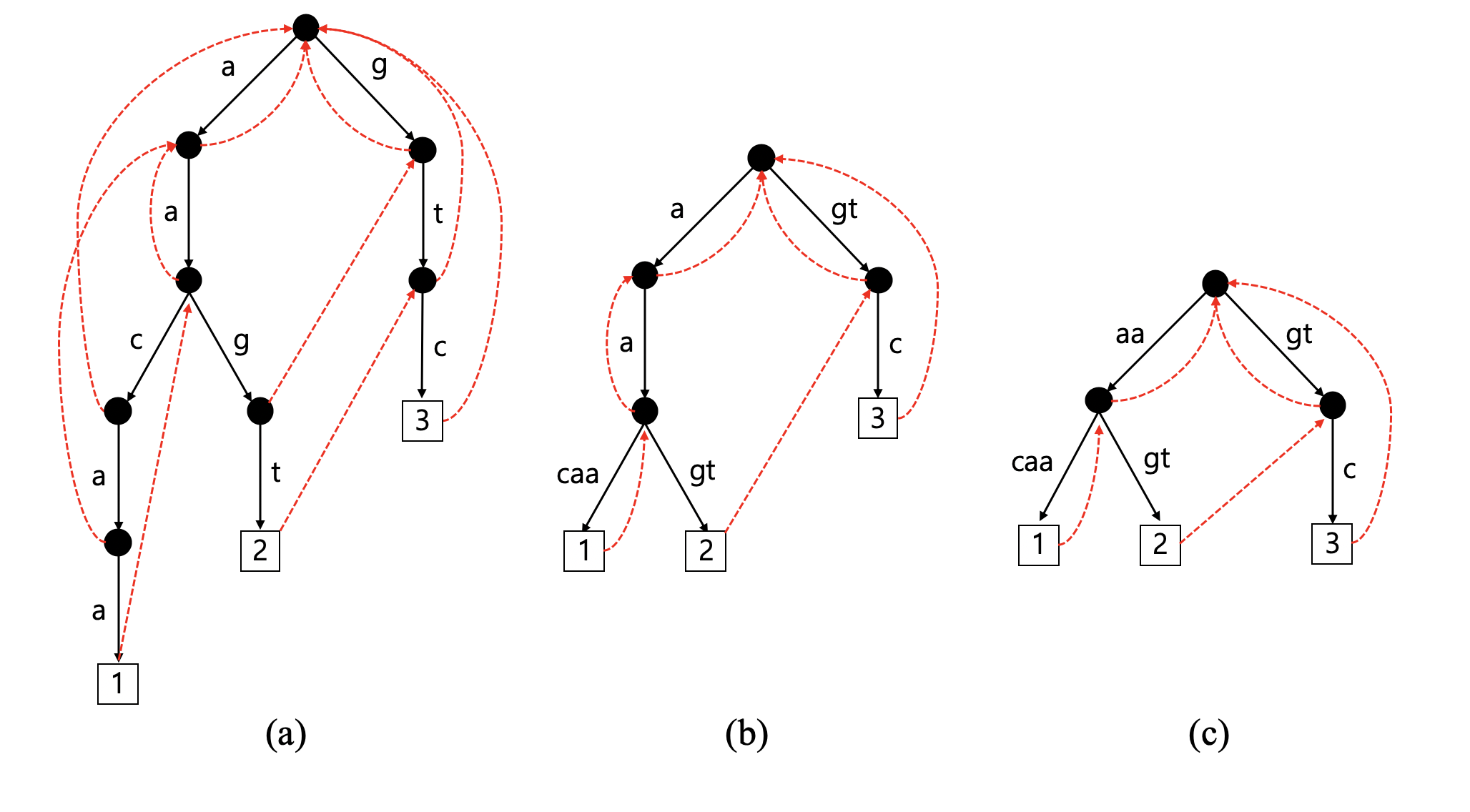}
    \caption{Data structures built with $P=\{aacaa, aagt, gtc\}$. (a) Aho-Corasick trie. (b) Extended Hierarchical Overlap Graph. (c) Hierarchical Overlap Graph.}
    \label{fig:hog}
\end{figure}

HOG also has an advantage compared to the overlap graph \cite{hogipl}. First, HOG uses only $O(\norm{P})$ space, while the overlap graph needs $O(\norm{P} + n^2)$ space in total. For input instances with many short strings, HOG uses a considerably smaller amount of space than the overlap graph. Second, HOG contains the relationship between the overlaps themselves, since the overlaps appear as nodes in HOG. In contrast, the overlap graph stores only the lengths of the longest overlaps, and thus we cannot find the relationship between two overlaps easily. Therefore, HOG stores more information than the overlap graph, while using less space.

There have been many works to compute HOG and EHOG efficiently. Computing the EHOG from $P$ costs $O(\norm{P})$ time, which is optimal \cite{cyccover}. For computing the HOG, Cazaux and Rivals proposed an $O(\norm{P} + n^2)$ time algorithm using $O(\norm{P}+n\times\min(n,\max\{|s|:s\in P\}))$ space \cite{hogipl}. Recently, Park et al.~\cite{sgpark} gave an $O(\norm{P} \log n)$ time algorithm using $O(\norm{P})$ space by using the segment tree data structure.

In this paper we present a new algorithm to compute HOG, which uses $O(\norm{P})$ time and space, which are both optimal. The algorithm is based on the Aho-Corasick trie \cite{Aho-Corasick} and the border array \cite{KMP}.
Therefore, the construction time and space of HOG are better than those of the overlap graph, which are $O(\norm{P}+n^2)$, and this fact may lead to many applications of HOG. 
For example, consider the problem of finding \emph{optimal cycle cover} in the overlap graph built with a set $P = \{s_1, s_2, ..., s_n\}$ of strings.
Typically this problem needs to be solved in finding good approximations of shortest superstrings. 
A greedy algorithm to solve the optimal cycle cover problem on the overlap graph was given in \cite{Blum94}, which takes $O(\norm{P} + n^2)$ time. Recently, Cazaux and Rivals proposed an $O(\norm{P})$ time algorithm to solve the optimal cycle cover problem given the HOG or EHOG of $P$ \cite{jda2016}. 
By using our result in this paper, the optimal cycle cover problem can be solved in $O(\norm{P})$ time and space by using HOG instead of the overlap graph.

The rest of the paper is organized as follows.
In Section 2 we give preliminary information for HOG and formalize the problem. In Section 3 we present an $O(\norm{P})$ time and space algorithm for computing HOG. In Section 4 we conclude and discuss a future work.

\section{Preliminaries}

\subsection{Basic notation}
We consider strings over a constant-size alphabet $\Sigma$. The length of a string $s$ is denoted by $|s|$. Given two integers $1 \le l \le r \le |s|$, the substring of $s$ which starts from $l$ and ends at $r$ is denoted by $s[l..r]$. Note that $s[l..r]$ is a prefix of $s$ when $l = 1$, and a suffix of $s$ when $r = |s|$. If a prefix (suffix) of $s$ is different from $s$, we call it a proper prefix (suffix) of $s$. Given two strings $s$ and $t$, an \emph{overlap} from $s$ to $t$ is a string which is both a proper suffix of $s$ and a proper prefix of $t$. Given a set $P = \{s_1, s_2, ..., s_n\}$ of strings, the sum of $|s_i|$'s is denoted by $||P||$.

\subsection{Hierarchical Overlap Graph}

We define \emph{hierarchical overlap graph} and \emph{extended hierarchical overlap graph} as in \cite{hogipl}.

\begin{definition}

\emph{(Hierarchical Overlap Graph) Given a set $P = \{s_1, s_2, \ldots, s_n\}$, we define $Ov(P)$ as the set of the \emph{longest} overlap from $s_i$ to $s_j$ for $1 \le i, j \le n$. The \emph{hierarchical overlap graph} of $P$, denoted by $\hog(P)$, is a directed graph with a vertex set $V = P \cup Ov(P) \cup \{\epsilon\}$ and an edge set $E = E_1 \cup E_2$, where $E_1 = \{(x, y) \in V \times V \mid x$ is the longest proper prefix of $y\}$ and $E_2 = \{(x, y) \in V \times V \mid y$ is the longest proper suffix of $x\}$.}
\end{definition}

\begin{definition}

\emph{(Extended Hierarchical Overlap Graph) Given a set $P = \{s_1, s_2, \ldots, s_n\}$, we define $Ov^+(P)$ as the set of all overlaps from $s_i$ to $s_j$ for $1 \le i, j \le n$. The \emph{extended hierarchical overlap graph} of $P$, denoted by $\ehog(P)$, is a directed graph with a vertex set $V^+ = P \cup Ov^+(P) \cup \{\epsilon\}$ and an edge set $E^+ = E^+_1 \cup E^+_2$, where $E^+_1 = \{(x, y) \in V^+ \times V^+ \mid x$ is the longest proper prefix of $y\}$ and $E^+_2 = \{(x, y) \in V^+ \times V^+ \mid y$ is the longest proper suffix of $x\}$.}
\end{definition}

Figure \ref{fig:hog} shows the Aho-Corasick trie \cite{Aho-Corasick}, EHOG, and HOG built with $P = \{aacaa, aagt, gtc\}$. It is shown in \cite{hogipl} that EHOG is a contracted form of the Aho-Corasick trie and HOG is a contracted form of EHOG. 

As in the Aho-Corasick trie, each node $u$ in HOG or EHOG corresponds to a string (denoted by $S(u)$), which is a concatenation of all labels on the path from the root (node representing $\epsilon$) to $u$.

There are two types of edges in EHOG and HOG as in the Aho-Corasick trie: a tree edge and a failure link. An edge $(u, v)$ is a tree edge (an edge in $E^+_1$ or $E_1$, solid line in Figure \ref{fig:hog}) in an EHOG (HOG), if $S(u)$ is the longest proper prefix of $S(v)$ in the EHOG (HOG). It is a failure link (an edge in $E^+_2$ or $E_2$, dotted line in Figure \ref{fig:hog}) in an EHOG (HOG), if $S(v)$ is the longest proper suffix of $S(u)$ in the EHOG (HOG).

Given a set $P = \{ s_1, s_2, \ldots, s_n\}$ of strings, we can build an EHOG of $P$ in $O(||P||)$ time and space \cite{hogipl}. Furthermore, given $\ehog(P)$ and $Ov(P)$, we can compute $\hog(P)$ in $O(||P||)$ time and space \cite{hogipl}. Therefore, the bottleneck of computing $\hog(P)$ is computing $Ov(P)$ efficiently.

\section{Computing HOG in linear time}

In this section we introduce an algorithm to build the HOG of $P=\{s_1, s_2, \ldots, s_n\}$ in $O(\norm{P})$ time. We assume that there are no two different strings $s_i, s_j \in P$ such that $s_i$ is a substring of $s_j$ for simplicity of presentation. Our algorithm directly computes $\hog(P)$ (and $Ov(P)$) from the Aho-Corasick trie of $P$ in $O(\norm{P})$ time. 

Let's assume we have the Aho-Corasick trie of $P$ including the failure links. We define $R(u)$ as follows:

\begin{equation}
R(u) = \{i \in \{1, \ldots, n\} \mid S(u) \text{ is a proper prefix of } s_i\}.
\end{equation}
That is, $R(u)$ is a set of string indices in the subtree rooted at $u$ if $u$ is an internal node, or an empty set if $u$ is a leaf node.

For each input string $s_i$, we will do the following operation separately, which is to find the longest overlap from $s_i$ to any string in $P$. Consider a path $(v_0, v_1, \ldots, v_l)$ which starts from the leaf representing $s_i$ and follows the failure links until it reaches the root, i.e., $S(v_0) = s_i$ and $v_l$ is the root of the tree. By definition of the failure link, the strings corresponding to nodes appearing in the path are the suffixes of $s_i$. If there are an index $j$ and a node $v_k$ on the path such that $j\in R(v_k)$, $S(v_k)$ is both a suffix of $s_i$ and a proper prefix of $s_j$, so $S(v_k)$ is an overlap from $s_i$ to $s_j$. 

$S(v_k)$ for $0 < k \leq l$ is the longest overlap from $s_i$ to $s_j$ if and only if $j \in R(v_k)$ and there is no $m$ such that $0 \le m < k$ and $j \in R(v_m)$. If there exists such $m$, then $S(v_m)$ is a longer overlap from $s_i$ to $s_j$ than $S(v_k)$, so $S(v_k)$ is not the longest overlap. Therefore, we get the following lemma.

\begin{lemma}
\label{lem:overlapcondition}
\emph{$S(v_k)$ is the longest overlap from $s_i$ to $s_j$ if and only if $j \in R(v_k) - R(v_{k-1}) - \ldots - R(v_0)$.}
\end{lemma}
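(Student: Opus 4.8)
The plan is to prove Lemma~\ref{lem:overlapcondition} by unpacking the two implications that were informally sketched just before the statement, using only the definition of the failure-link path and the definition of $R(\cdot)$.

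First I would set up notation precisely: fix $i$ and the failure path $(v_0, v_1, \ldots, v_l)$ with $S(v_0) = s_i$ and $S(v_l) = \epsilon$, and recall two facts. (1) Since consecutive nodes on the path are joined by failure links, $S(v_k)$ is a proper suffix of $S(v_{k-1})$ for each $k$, and by transitivity $S(v_0), S(v_1), \ldots, S(v_l)$ is a strictly decreasing-in-length chain of suffixes of $s_i$; in particular every $S(v_k)$ with $k \geq 1$ is a proper suffix of $s_i$. (2) By the definition of $R$, $j \in R(v_k)$ is equivalent to ``$S(v_k)$ is a proper prefix of $s_j$''. Combining (1) and (2): for $k \geq 1$, $j \in R(v_k)$ is equivalent to ``$S(v_k)$ is an overlap from $s_i$ to $s_j$''. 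I would also want the converse inclusion here — that \emph{every} overlap from $s_i$ to $s_j$ equals $S(v_k)$ for some $k$ on the path — which follows because any such overlap is a proper suffix of $s_i$, hence corresponds to a node reachable from $v_0$ by iterating failure links, i.e.\ some $v_k$ with $k \geq 1$; this is the standard property of the Aho-Corasick failure function, and I would cite it as such.

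Next I would prove the ``only if'' direction: suppose $S(v_k)$ is the longest overlap from $s_i$ to $s_j$ (with $0 < k \le l$). Then $j \in R(v_k)$ by the equivalence above. Suppose for contradiction that $j \in R(v_m)$ for some $0 \le m < k$. If $m \geq 1$, then $S(v_m)$ is also an overlap from $s_i$ to $s_j$, and since the chain is strictly decreasing in length and $m < k$, we have $|S(v_m)| > |S(v_k)|$, contradicting maximality. If $m = 0$, then $j \in R(v_0)$ means $s_i = S(v_0)$ is a proper prefix of $s_j$, i.e.\ $s_i$ is a substring of $s_j$, which is excluded by the blanket assumption of Section~3 (no string is a substring of another). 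Hence no such $m$ exists, so $j \in R(v_k) - R(v_{k-1}) - \cdots - R(v_0)$. For the ``if'' direction, assume $j \in R(v_k) - R(v_{k-1}) - \cdots - R(v_0)$. Then $j \in R(v_k)$ gives that $S(v_k)$ is an overlap from $s_i$ to $s_j$. Any strictly longer overlap $o$ from $s_i$ to $s_j$ is a proper suffix of $s_i$ with $|o| > |S(v_k)|$; by the Aho-Corasick property it equals $S(v_m)$ for some $m$ on the path, and since lengths strictly decrease along the path, $|S(v_m)| > |S(v_k)|$ forces $m < k$; then $j \in R(v_m)$ contradicts the hypothesis that $j \notin R(v_m)$ for all $m < k$. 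Therefore $S(v_k)$ is the longest overlap from $s_i$ to $s_j$.

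I do not expect a genuine obstacle here — the lemma is essentially a restatement of the paragraph preceding it — but the one point that deserves care is the $m = 0$ edge case in the ``only if'' direction, since $R(v_0)$ is included in the subtraction in the lemma statement even though $S(v_0) = s_i$ is never itself an overlap from $s_i$ to $s_j$ (an overlap must be a \emph{proper} suffix). The clean way to handle it is exactly as above: invoke the substring-free assumption so that $j \in R(v_0)$ is simply impossible, which makes the term $R(v_0)$ in the subtraction harmless (it removes nothing). The other mild subtlety worth stating explicitly is the direction ``every overlap appears as some $S(v_k)$ on the failure path,'' which is what licenses the contradiction step in the ``if'' direction; it is standard but should be noted rather than silently assumed.
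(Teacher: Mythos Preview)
Your proposal is correct and follows essentially the same approach as the paper: the paper's argument is the short paragraph immediately preceding the lemma, and you have simply expanded that sketch into a careful two-direction proof, making explicit the $m=0$ edge case (handled via the substring-free assumption, equivalently the fact that $v_0$ is a leaf so $R(v_0)=\emptyset$) and the Aho--Corasick fact that every proper suffix of $s_i$ lying in the trie appears on the failure path.
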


Therefore, if $|R(v_k) - R(v_{k-1}) - \ldots - R(v_0)| > 0$, $S(v_k)$ is the longest overlap from $s_i$ to $s_j$ for $j \in R(v_k) - R(v_{k-1}) - \ldots - R(v_0)$, and thus $v_k \in Ov(P)$. Therefore, we aim to compute $|R(v_k) - R(v_{k-1}) - \ldots - R(v_0)|$ for every $0 < k \leq l$.

Given an index $k$, we define $k+1$ auxiliary sets of indices $I_k(k), I_k(k-1), \ldots, I_k(0)$ in a recursive manner as follows.

\begin{itemize}
\item $I_k(k) = R(v_k)$
\item $I_k(m) = I_k(m+1) - R(v_m)$ for $m = k-1, k-2, \ldots, 0$
\end{itemize}

By definition, $I_k(0)$ is $R(v_k) - R(v_{k-1}) - \ldots - R(v_0)$ in Lemma \ref{lem:overlapcondition} and we want to compute $|I_k(0)|$. For every $0 \leq m < k$, $I_k(m) = I_k(m+1) - R(v_m) \subseteq I_k(m+1)$ and thus $|I_k(m)| = |I_k(m+1)| - |I_k(m+1) - I_k(m)|$ holds. By summing up all these equations for $0 \leq m < k$, we get $|I_k(0)| = |I_k(k)| - \sum_{m=0}^{k-1} {|I_k(m+1) - I_k(m)|}$. Since $I_k(k) = R(v_k)$ and $I_k(m+1) - I_k(m) = I_k(m+1) - (I_k(m+1) - R(v_m)) = I_k(m+1) \cap R(v_m)$, we have

\begin{equation}
|I_k(0)| = |R(v_k)| - \sum_{m=0}^{k-1} {|I_k(m+1) \cap R(v_m)|}.
\label{eq:Ik}
\end{equation}

We also define a new function $up(u)$ for a node $u$ as follows.

\begin{definition}
\emph{Given a node $u$ in the Aho-Corasick trie, $up(u)$ is defined as the first ancestor of $u$ (except $u$ itself) in the path that starts at $u$ and follows the failure links until it reaches the root node. We define an ancestor on the tree which consists of tree edges in the Aho-Corasick trie.}
\end{definition}
\noindent Note that $up(u)$ is well defined when $u$ is not the root node, since the root node is always an ancestor of $u$. When $u$ is the root node, $up(u)$ is empty.

\begin{figure}[t]
    \centering
    \includegraphics[width=0.4\linewidth]{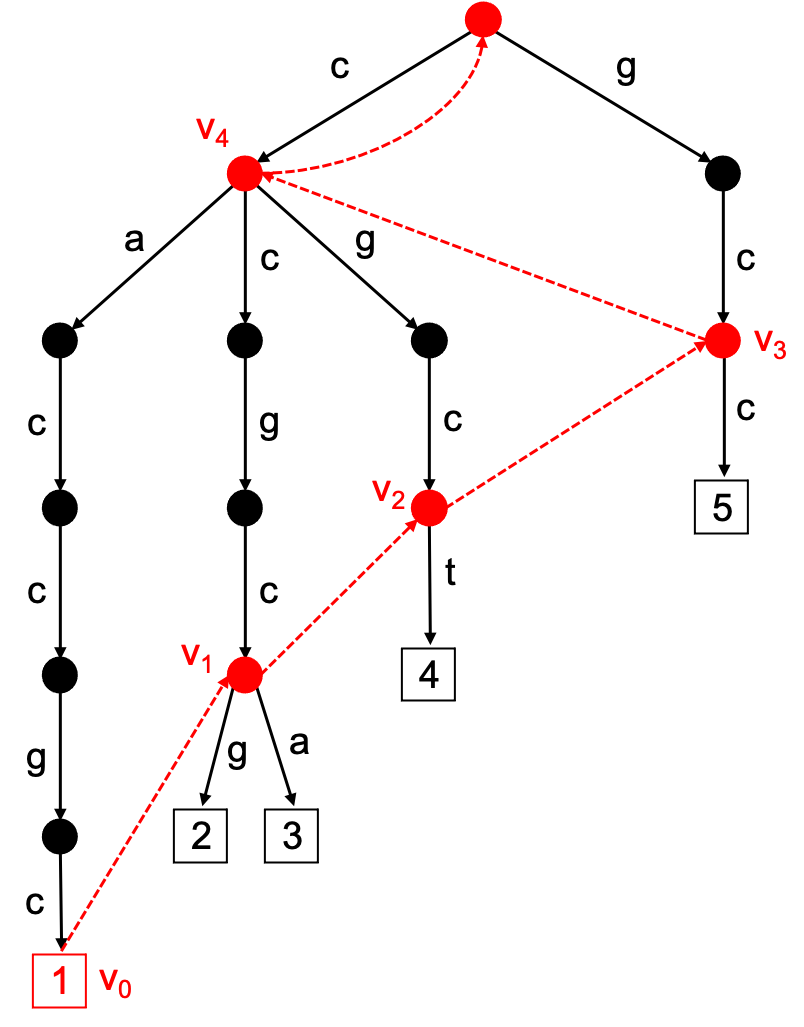}
    \caption{Aho-Corasick trie with $P = \{caccgc, ccgcg, ccgca, cgct, gcc\}$.}
    \label{fig:vpic}
\end{figure}

Now we analyze the value of $|I_k(m+1) \cap R(v_m)|$ in Equation (\ref{eq:Ik}) for each $0 \leq m < k$ as follows. We use a path $(v_0, v_1, ..., v_5)$ in Figure \ref{fig:vpic} as a running example, i.e., $l = 5$ and $0 < k \leq 5$.

\begin{lemma}
\label{lem:caseanalysis}
\emph{$|I_k(m+1) \cap R(v_m)|$ is $|R(v_m)|$ if $up(v_m) = v_k$; it is $0$ otherwise.}
\end{lemma}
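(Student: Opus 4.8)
The plan is to unwind the recursive definition of $I_k(m)$ to get an explicit description, namely $I_k(m) = R(v_k) - R(v_{k-1}) - \cdots - R(v_m)$, and then to analyze the intersection $I_k(m+1)\cap R(v_m)$ using the tree structure of the Aho-Corasick trie. The key observation is that for any node $v$, the set $R(v)$ consists exactly of the indices $j$ such that $S(v)$ is a proper prefix of $s_j$; equivalently, $j \in R(v)$ iff the leaf for $s_j$ lies in the subtree rooted at $v$ (via tree edges). So $R(v_m) \cap R(v_p)$ is nonempty only when one of $v_m, v_p$ is a (tree-)ancestor of the other, in which case $R(v_m)\cap R(v_p)$ equals the $R$-set of the deeper node. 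Since the $v_i$'s lie on a failure-link path, $S(v_0) \supsetneq S(v_1) \supsetneq \cdots$ are strictly decreasing suffixes, hence $|S(v_0)| > |S(v_1)| > \cdots$, so no $v_p$ with $p < m$ can be a tree-descendant of $v_m$ (a descendant would be strictly longer). Therefore $R(v_p) \cap R(v_m)$ is either empty or equal to $R(v_m)$, and it equals $R(v_m)$ exactly when $v_p$ is a tree-ancestor of $v_m$.

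First I would rewrite $I_k(m+1)\cap R(v_m)$. From the explicit form, $I_k(m+1) = R(v_k) - R(v_{k-1}) - \cdots - R(v_{m+1})$, so
\[
I_k(m+1) \cap R(v_m) = \bigl(R(v_k)\cap R(v_m)\bigr) - \bigl(R(v_{k-1})\cap R(v_m)\bigr) - \cdots - \bigl(R(v_{m+1})\cap R(v_m)\bigr).
\]
By the observation above, each term $R(v_p)\cap R(v_m)$ for $m < p \le k$ is either $\emptyset$ or $R(v_m)$, the latter precisely when $v_p$ is a tree-ancestor of $v_m$. Now I distinguish two cases. If no $v_p$ with $m < p \le k$ is a tree-ancestor of $v_m$, then every term on the right-hand side is empty, so the whole expression is empty and $|I_k(m+1)\cap R(v_m)| = 0$. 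If some $v_p$ with $m < p \le k$ is a tree-ancestor of $v_m$, then in particular $R(v_k)\cap R(v_m)$ could be $\emptyset$ or $R(v_m)$: here I need the definition of $up$. The node $up(v_m)$ is the first ancestor of $v_m$ encountered along the failure-link path from $v_m$, i.e.\ the $v_p$ with smallest $p > m$ that is a tree-ancestor of $v_m$. So if $up(v_m) = v_k$, then $v_k$ is a tree-ancestor of $v_m$ (giving $R(v_k)\cap R(v_m) = R(v_m)$), while no $v_p$ with $m < p < k$ is a tree-ancestor of $v_m$ (giving $R(v_p)\cap R(v_m) = \emptyset$ for those $p$); hence the right-hand side collapses to $R(v_m) - \emptyset - \cdots - \emptyset = R(v_m)$, so $|I_k(m+1)\cap R(v_m)| = |R(v_m)|$.

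It remains to handle the case where some $v_p$ ($m < p \le k$) is a tree-ancestor of $v_m$ but $up(v_m)\neq v_k$. Then $up(v_m) = v_q$ for some $q$ with $m < q < k$ (it cannot be some $v_q$ with $q > k$ and still satisfy $q \le k$, and it cannot be $v_k$ by assumption; and such a $q$ exists because the $v_p$ that is an ancestor witnesses one, and $up$ picks the closest). Thus $R(v_q)\cap R(v_m) = R(v_m)$ appears as a subtracted term, which already removes all of $R(v_m)$ from the expression regardless of the $R(v_k)$ term, so $I_k(m+1)\cap R(v_m) = \emptyset$ and $|I_k(m+1)\cap R(v_m)| = 0$. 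Combining the three cases gives exactly the statement: the value is $|R(v_m)|$ if $up(v_m) = v_k$ and $0$ otherwise.

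The step I expect to require the most care is the structural claim that, along a failure-link path, $R(v_p)\cap R(v_m)$ is always $\emptyset$ or $R(v_m)$ — i.e.\ that these $R$-sets are nested or disjoint as $p$ ranges over the path — and that ``$v_p$ is a tree-ancestor of $v_m$'' is the right characterization of when the intersection is full. This needs the facts that the $S(v_i)$ are strictly decreasing in length along the failure path (so deeper indices never appear as tree-ancestors of shallower ones) and that $R$-sets of two trie nodes intersect iff one is a tree-ancestor of the other. Everything else is a bookkeeping argument about which terms in the telescoped intersection survive, driven entirely by the definition of $up(v_m)$ as the closest failure-path ancestor.
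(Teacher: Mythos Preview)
Your proof is correct and follows essentially the same approach as the paper: both arguments reduce to a case analysis on whether $v_k$ is a tree-ancestor of $v_m$ and whether some intermediate $v_q$ (with $m<q<k$) is also a tree-ancestor of $v_m$, which is exactly the condition $up(v_m)=v_k$. One small slip: where you write ``no $v_p$ with $p<m$ can be a tree-descendant of $v_m$,'' you mean $p>m$ (for $p>m$ one has $|S(v_p)|<|S(v_m)|$, so $v_p$ cannot be a proper descendant of $v_m$); with that correction the nested-or-disjoint claim for the $R$-sets goes through as you intend.
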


\begin{proof}
We divide the relationship between $v_m$ and $v_k$ into cases.

\begin{enumerate}
\item $v_m$ is outside the subtree rooted at $v_k$

Let's assume that $I_k(m+1) \cap R(v_m)$ is not empty and there exists $j \in I_k(m+1) \cap R(v_m)$. Then $j \in R(v_k) \cap R(v_m)$ should hold since $I_k(m+1) \subseteq I_k(k) = R(v_k)$. Therefore, both $v_m$ and $v_k$ should be the ancestors of the leaf corresponding to $s_j$. Because $|S(v_m)| > |S(v_k)|$, $v_k$ should be an ancestor of $v_m$. Since $v_m$ is outside the subtree rooted at $v_k$, $v_k$ cannot be an ancestor of $v_m$, which is a contradiction. Therefore such $j$ does not exist, which shows that $I_k(m+1) \cap R(v_m) = \emptyset$ and $|I_k(m+1) \cap R(v_m)| = 0$. 

For example, consider the case with $k = 4$ and $m = 3$ in Figure \ref{fig:vpic}. Since $I_4(4) = R(v_4) = \{1, 2, 3, 4\}$ and $R(v_3) = \{5\}$, we can see that $I_4(4) \cap R(v_3) = \emptyset$.

\item $v_m$ is inside the subtree rooted at $v_k$

In this case, $v_k$ is an ancestor of $v_m$ and we further divide it into cases.
\begin{enumerate}
    \item There exists $q$ such that $m < q < k$ and $v_q$ is an ancestor of $v_m$.
    
    We get $R(v_m) \subseteq R(v_q)$ because $v_q$ is an ancestor of $v_m$. Since $I_k(m+1) = R(v_k) - R(v_{k-1}) - ... - R(v_{m+1})$ and $m < q < k$, we have $I_k(m+1) \subseteq R(v_k) - R(v_q)$. Therefore, $I_k(m+1) \cap R(v_m) \subseteq (R(v_k) - R(v_q)) \cap R(v_q) = \emptyset$. That is, $I_k(m+1) \cap R(v_m) = \emptyset$ and $|I_k(m+1) \cap R(v_m)| = 0$.
    
    \item For any $q$ such that $m < q < k$, $v_q$ is not an ancestor of $v_m$.
    
    Here we show that $R(v_m) \subseteq I_k(m+1)$. Let's consider an index $x \in R(v_m)$. Since $v_k$ is an ancestor of $v_m$, we have $x \in R(v_k)$. Moreover, for any $q$ such that $m < q < k$, neither $v_q$ is an ancestor of $v_m$ nor $v_m$ is an ancestor of $v_q$. That is, $R(v_q) \cap R(v_m) = \emptyset$ and thus $x \notin R(v_q)$. Therefore, we have $x \in I_k(m+1) = R(v_k)-R(v_{k-1})-\ldots-R(v_{m+1})$. In conclusion, $R(v_m) \subseteq I_k(m+1)$ and thus $|I_k(m+1) \cap R(v_m)| = |R(v_m)|$. 
    
    For example, consider the case with $k = 4$ and $m = 1$ in Figure \ref{fig:vpic}. Since $I_4(2) = R(v_4) - R(v_3) - R(v_2) = \{1, 2, 3\}$ and $R(v_1) = \{2, 3\}$, we can see that $R(v_1) \subseteq I_4(2)$ and $I_4(2) \cap R(v_1) = R(v_1)$.
\end{enumerate}

\end{enumerate}

In summary, $|I_k(m+1) \cap R(v_m)| = |R(v_m)|$ in case 2(b), and $0$ otherwise. In case 2(b), $v_k$ is an ancestor of $v_m$ and there is no $q$ such that $m < q < k$ and $v_q$ is an ancestor of $v_m$. In other words, $v_k$ is the first ancestor of $v_m$ in the path starting from $v_m$ and following the failure links repeatedly, which means that $up(v_m) = v_k$.

\end{proof}

\begin{theorem}
\label{thm:Ik}
\emph{For every $0 < k \leq l$, $|I_k(0)| = |R(v_k)| - \sum_{v_m} {|R(v_m)|}$, where $0\leq m<k$ and $up(v_m) = v_k$.}
\end{theorem}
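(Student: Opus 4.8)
The plan is to obtain Theorem~\ref{thm:Ik} by simply combining Equation~(\ref{eq:Ik}) with Lemma~\ref{lem:caseanalysis}. Equation~(\ref{eq:Ik}) already writes $|I_k(0)|$ as $|R(v_k)|$ minus the sum $\sum_{m=0}^{k-1}|I_k(m+1)\cap R(v_m)|$, so the only task left is to evaluate each summand. Lemma~\ref{lem:caseanalysis} does exactly that: $|I_k(m+1)\cap R(v_m)|$ equals $|R(v_m)|$ when $up(v_m)=v_k$ and equals $0$ otherwise. Substituting this value termwise, every index $m$ with $up(v_m)\neq v_k$ contributes nothing, so the sum collapses to $\sum_m |R(v_m)|$ taken over exactly those $m$ with $0\le m<k$ and $up(v_m)=v_k$, which is the claimed identity.

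The one small point I would make explicit is that the notation $\sum_{v_m}$ in the statement is unambiguous. Along the failure-link path $(v_0,v_1,\ldots,v_l)$ the strings $S(v_0),S(v_1),\ldots,S(v_l)$ are suffixes of $s_i$ of strictly decreasing length, hence the nodes $v_0,\ldots,v_l$ are pairwise distinct; therefore indexing the surviving terms by the node $v_m$ is the same as indexing them by the integer $m$, and no term is double-counted. It is also worth noting that the range of summation is consistent: Equation~(\ref{eq:Ik}) runs over $0\le m\le k-1$, and the condition $up(v_m)=v_k$ automatically forces $m<k$, since $up(v_m)$ is a strict ancestor of $v_m$ and hence occurs strictly later on the failure path.

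I do not expect any genuine obstacle here: all the combinatorial content — the subtree/ancestor case distinction and the role of $up$ — has already been discharged in the proof of Lemma~\ref{lem:caseanalysis}, and Theorem~\ref{thm:Ik} is just the bookkeeping that feeds those per-term values back into Equation~(\ref{eq:Ik}). Consequently the proof should consist of one line of substitution together with the brief remark on distinctness of the $v_m$'s.
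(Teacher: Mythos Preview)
Your proposal is correct and matches the paper's own proof essentially line for line: the paper also just plugs Lemma~\ref{lem:caseanalysis} into Equation~(\ref{eq:Ik}) and observes that only the terms with $up(v_m)=v_k$ survive. Your additional remarks on the distinctness of the $v_m$'s and the consistency of the summation range are sound clarifications that the paper leaves implicit.
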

\begin{proof}
From Equation (\ref{eq:Ik}), we have $|I_k(0)| = |R(v_k)| - \sum_{m=0}^{k-1} {|I_k(m+1) \cap R(v_m)|}$. By Lemma \ref{lem:caseanalysis}, we have $\sum_{m=0}^{k-1} {|I_k(m+1) \cap R(v_m)|} = \sum_{v_m : up(v_m) = v_k} {|R(v_m)|}$. By merging the two equations, we have the theorem.
\end{proof}

Now let's consider the relationship between $u$ and $up(u)$. $S(up(u))$ is a proper suffix of $S(u)$ because $up(u)$ can be reached from $u$ through failure links. Furthermore, $S(up(u))$ is a proper prefix of $S(u)$ because $up(u)$ is an ancestor of $u$. That is, $S(up(u))$ is a \textit{border} \cite{KMP} of $S(u)$. Moreover, we visit every suffix of $u$ in the decreasing order of lengths and $S(up(u))$ is the first border we visit, so $S(up(u))$ is the \textit{longest border} of $S(x)$. Since each node in the Aho-corasick trie corresponds to a prefix of some $s_i$, we can compute $up(u)$ for all nodes $u$ by computing the border array of every $s_i$ as follows. Let $pnode_i(l)$ be the node which corresponds to $s_i[1..l]$, and $border_i(l)$ be the length of the longest border of $s_i[1..l]$. Then we have the following equation for every $s_i$ and $1 \le l \le |s_i|$:
\begin{equation}
\label{eq:upborder}
up(pnode_i(l)) = pnode_i(border_i(l)).
\end{equation}
If we store $pnode_i$ and $border_i$ using arrays, we can compute $pnode_i, border_i$, and $up(u)$ in $O(\norm{P})$ time and space, because $border_i$ can be computed in $O(\norm{P})$ time using an algorithm in \cite{KMP}.

For example, let's consider Figure \ref{fig:vpic}, which is an Aho-Corasick trie built with a set $P = \{s_1 = caccgc, s_2 = ccgcg, s_3 = ccgca, s_4 = cgct, s_5 = gcc\}$ of strings. For each string, we compute its corresponding border array, and get $border_1 = (0, 0, 1, 1, 0, 1)$, $border_2 = (0, 1, 0, 1, 0)$, $border_3 = (0, 1, 0, 1, 0)$, $border_4 = (0, 0, 1, 0)$, and $border_5 = (0, 0, 0)$. We also store $pnode_i$'s by traversing the Aho-Corasick trie. Now we can compute $up$ by using $pnode_i$ and $border_i$. For example, let's consider $v_1 = pnode_2(4)$, which represents $ccgc$. Since the longest border of $ccgc$ is $c$, which has length $1$, we have $border_2(4) = 1$. As a result, we have $up(v_1) = up(pnode_2(4)) = pnode_2(border_2(4)) = pnode_2(1) = v_4$ by Equation (\ref{eq:upborder}). Note that $v_4$ represents $c$, which is the longest border of $ccgc$.

\begin{algorithm}[t]
\caption{Build HOG in linear time}\label{alg:computehog}
\begin{algorithmic}[1]
\Procedure{Build-HOG}{$P$}
    \State Build the Aho-Corasick trie with $P$
    \State Compute border arrays $border_i$ for $1 \leq i \leq n$
    \State Compute $up(u)$ for each node $u$
    \State Compute $|R(u)|$ for each node $u$
    \State Mark \texttt{root} as included in $\hog(P)$
    \State For each node $u$, initialize $\text{Child}(u)$ with an empty set
    \For{$i \gets 1$ \textbf{to} $n$}
        \State $u \gets$ leaf corresponding to $s_i$ in Aho-Corasick trie
        \State Mark $u$ as included in $\hog(P)$
        \While{$u \neq \mathtt{root}$}
            \State $I(u) \gets |R(u)|$
            \ForAll{$u' \in \text{Child}(u)$}
                \State $I(u) \gets I(u) - |R(u')|$
            \EndFor
            \If{$I(u) > 0$}
                \State Mark $u$ as included in HOG(P)
            \EndIf
            \State $\text{Child}(u) \gets$ an empty set
            \State Add $u$ to $\text{Child}(up(u))$
            \State $u \gets$ failure link of $u$
        \EndWhile
    \EndFor
    \State Build HOG(P) with marked nodes
\EndProcedure
\end{algorithmic}
\end{algorithm}

We are ready to describe an algorithm to compute HOG of $P$ in $O(\norm{P})$ time and space. First, we build the Aho-Corasick trie with $P$ and a border array for each $s_i$. By using the border arrays, we compute $up(u)$ for every node $u$ except the root. Next, we compute $|R(u)|$ for each node $u$ by the post-order traversal of the Aho-Corasick trie. For each string $s_i$, we start from the leaf node corresponding to $s_i$ and follow the failure links until we reach the root. For each node $v_k$ that we visit, we compute its corresponding $|I_k(0)| = |R(v_k)| - \sum_{v_m : up(v_m) = v_k} {|R(v_m)|}$. If $|I_k(0)| > 0$, we mark $v_k$ to be included in HOG. Algorithm \ref{alg:computehog} shows an algorithm to compute HOG. Lines 2-5 compute the preliminaries for the algorithm, while lines 6-19 compute the nodes to be included in HOG. Note that the loop of lines 8-19 works separately for each input string $s_i$. We consider $v_k$ in the order of increasing $k$, and thus if $up(v_m) = v_k$, then $m < k$. Hence, $\text{Child}(v_k)$ in line 13 stores every $v_m$ such that $up(v_m) = v_k$ by line 18 of previous iterations. For each node $u = v_k$ in lines 11-19, $I(u)$ correctly computes $|I_k(0)|$ since we get $|R(v_k)|$ in line 12 and subtract every $|R(v_m)|$ where $v_k = up(v_m)$ in lines 13-14. According to Theorem \ref{thm:Ik}, lines 12-14 correctly computes $|I_k(0)|$. We build $\hog(P)$ in line 20 by removing the unmarked nodes and contracting the edges while traversing the Aho-Corasick trie once, as in \cite{hogipl}.

For example, let's consider a set $P = \{s_1 = caccgc, s_2 = ccgcg, s_3 = ccgca, s_4 = cgct, s_5 = gcc\}$ of strings. Figure \ref{fig:vpic} shows the Aho-Corasick trie built with $P$. Consider a path starting from a node representing $s_1$ and following the failure links until the root node. The path $(v_0, v_1, v_2, v_3, v_4, v_5)$ is marked with dotted lines in Figure \ref{fig:vpic}. By definition of $up$, we get $up(v_0) = up(v_1) = up(v_2) = v_4$ and $up(v_3) = up(v_4) = v_5$. Therefore, we can compute $|I_k(0)|$'s as follows.

\begin{itemize}
    \item $|I_0(0)| = |R(v_0)| = 0$
    \item $|I_1(0)| = |R(v_1)| = 2$
    \item $|I_2(0)| = |R(v_2)| = 1$
    \item $|I_3(0)| = |R(v_3)| = 1$
    \item $|I_4(0)| = |R(v_4)| - |R(v_0)| - |R(v_1)| - |R(v_2)| = 4 - 0 - 2 - 1 = 1$
\end{itemize}

Note that $|R(v_0)| = 0$ by definition of $R(u)$. Since $v_1, v_2, v_3,$ and $v_4$ have positive $|I_k(0)|$'s, we mark them to be included in HOG. We do this process for every $s_i$.

Now we show that Algorithm \ref{alg:computehog} runs in $O(\norm{P})$ time and space. Computing an Aho-Corasick trie, a border array for each string, and $up(u)$ and $|R(u)|$ for each node $u$ costs $O(\norm{P})$ time and space. Furthermore, for a given index $i$, lines 13-14 are executed at most $|s_i|$ times since line 18 is executed at most $|s_i|$ times, and thus the sum of $|\text{Child}(u)|$ is at most $|s_i|$. Therefore, lines 9-19 run in $O(|s_i|)$ time for given $i$, and thus lines 8-19 run in $O(\norm{P})$ time in total. Also they use $O(|s_i|)$ additional space to store the $\text{Child}$ list. Lastly, we can build $\hog(P)$ with marked nodes in $O(\norm{P})$ space and time \cite{hogipl}. Therefore, Algorithm \ref{alg:computehog} runs in $O(\norm{P})$ time and space. We remark that Algorithm \ref{alg:computehog} can be modified so that it builds the HOG from an EHOG instead of an Aho-Corasick trie, while it still costs $O(\norm{P})$ time and space.

\begin{theorem}
\emph{Given a set $P$ of strings, $\hog(P)$ can be built in $O(\norm{P})$ time and space.}
\end{theorem}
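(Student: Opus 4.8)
The plan is to prove the theorem by verifying that Algorithm~\ref{alg:computehog} is correct and runs within the claimed bounds; all the structural facts needed have already been established. First I would argue correctness of the preprocessing in lines~2--5: the Aho-Corasick trie of $P$ together with its failure links is built in $O(\norm{P})$ time and space~\cite{Aho-Corasick}; each border array $border_i$ is computed in $O(|s_i|)$ time~\cite{KMP}, for a total of $O(\norm{P})$; the values $up(u)$ are then obtained for every non-root node $u$ from Equation~(\ref{eq:upborder}) by a single pass over the arrays $pnode_i$ and $border_i$; and $|R(u)|$ for all nodes $u$ is read off by one post-order traversal of the trie. Hence after line~5 all auxiliary data are available in $O(\norm{P})$ time and space.

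Next I would establish the invariant that drives the main loop (lines~8--19): when the inner while-loop reaches a node $u=v_k$ on the failure-link path from the leaf of $s_i$, the set $\text{Child}(u)$ contains exactly those nodes $v_m$ with $m<k$ on this same path for which $up(v_m)=v_k$. This holds because the path is visited in order of increasing depth along failure links (so $up(v_m)=v_k$ forces $v_m$ to be processed before $v_k$), because line~18 adds each processed $v_m$ to $\text{Child}(up(v_m))$, and because line~17 empties $\text{Child}(u)$ immediately after it has been consumed, so no stale entry from an earlier string survives. Combining this invariant with Theorem~\ref{thm:Ik}, the value $I(u)$ computed in lines~12--14 equals $|I_k(0)| = |R(v_k)| - \sum_{v_m:\, up(v_m)=v_k}|R(v_m)|$; by Lemma~\ref{lem:overlapcondition} this is the number of indices $j$ for which $S(v_k)$ is the longest overlap from $s_i$ to $s_j$, so the test $I(u)>0$ exactly characterises the nodes of $Ov(P)$ reachable from $s_i$. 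Ranging over all $i$, the nodes marked in lines~6, 10 and~16 are precisely the elements of $P\cup Ov(P)\cup\{\epsilon\}$, i.e. the vertex set of $\hog(P)$. Line~20 then contracts the Aho-Corasick trie onto these marked nodes, reproducing $\hog(P)$ with its two edge sets $E_1$ and $E_2$ exactly as in~\cite{hogipl}.

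For the complexity I would account for the main loop string by string: for fixed $i$, line~18 is executed once per node on the failure-link path of $s_i$, hence at most $|s_i|$ times, so the total size of all $\text{Child}$ lists touched while processing $s_i$ is at most $|s_i|$; therefore the combined cost of lines~13--14 over the whole inner loop is $O(|s_i|)$, and lines~9--19 run in $O(|s_i|)$ time using $O(|s_i|)$ extra space. Summing over $i$ gives $O(\norm{P})$ for lines~8--19. Finally, building $\hog(P)$ from the trie and the marked nodes costs $O(\norm{P})$ time and space~\cite{hogipl}, and adding the $O(\norm{P})$ preprocessing yields the claimed overall bound; the remark that the same algorithm can instead start from $\ehog(P)$ does not affect these estimates.

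The step I expect to need the most care is the bookkeeping invariant for $\text{Child}$: one must check both that the increasing-$k$ processing order is consistent with $up(\cdot)$, so that no child is ever missed, and that the reset in line~17 occurs at exactly the right moment, so that entries created while processing one string do not leak into the computation for another string that shares part of the same failure-link path. Everything else is a direct application of Theorem~\ref{thm:Ik} and Lemma~\ref{lem:overlapcondition} together with the linear-time bounds already cited for the Aho-Corasick trie, the border arrays, and the EHOG-to-HOG contraction.
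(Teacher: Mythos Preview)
Your proposal is correct and follows essentially the same approach as the paper: verify the preprocessing of lines~2--5, establish the $\text{Child}$ invariant so that lines~12--14 compute $|I_k(0)|$ via Theorem~\ref{thm:Ik}, and then bound the running time of lines~8--19 by $\sum_i O(|s_i|)=O(\norm{P})$ using the observation that line~18 fires at most $|s_i|$ times per string. One small wording slip: the failure-link path is traversed in order of \emph{decreasing} trie depth (equivalently, increasing index $k$), not increasing depth; your conclusion that $v_m$ is processed before $v_k=up(v_m)$ is nonetheless correct.
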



\section{Conclusion}
We have presented an $O(\norm{P})$ time and space algorithm to build the HOG, which improves upon an earlier $O(\norm{P} \log n)$ time solution. Since the input size of the problem is $O(\norm{P})$, the algorithm is optimal.

There are some interesting topics about HOG and EHOG which deserve the future work. As mentioned in the introduction, the \emph{shortest superstring problem} gained a lot of interest \cite{Blum94,Sweedyk00,Mucha13,Paluch14}. Since many algorithms dealing with the shortest superstring problem are based on the overlap graph, HOG may give better approximation algorithms for the shortest superstring problem by using the additional information that HOG has when compared to the overlap graph. 


\section*{Acknowledgements}
S. Park, S.G. Park and K. Park were supported by Institute for Information \& communications Technology Promotion(IITP) grant funded by the Korea government(MSIT) (No. 2018-0-00551, Framework of Practical Algorithms for NP-hard Graph Problems). B. Cazaux and E. Rivals acknowledge the funding from Labex NumeV (GEM flagship project, ANR 2011-LABX-076), and from the Marie Skłodowska-Curie Innovative Training Networks  ALPACA (grant 956229).

\bibliography{references}

\end{document}